\newtheorem{proof}{Proof}
\newtheorem{theorem}{Theorem}
\newtheorem{lemma}{Lemma}
\newtheorem{proposition}{Proposition}
\newtheorem{assumption}{Assumption}
\newtheorem{definition}{Definition}
 \newcommand{\R}{{\mathbb{R}}}
\newcommand{\Z}{{\mathbb Z}}
\begin{document}
\begin{frontmatter}
\title{Competitive Contagion with Sparse Seeding} 
\author[MIT,NEU]{Milad Siami} 
\author[MIT]{Amir Ajorlou}  and
\author[MIT]{Ali Jadbabaie}  
\address[MIT]{Institute for Data, Systems, and Society, Massachusetts Institute of Technology, Cambridge, MA 02139 USA.  \\ (e-mails: {\{siami, ajorlou, jadbabai\}@mit.edu})}
\address[NEU]{Electrical \& Computer Engineering Department, Northeastern University, Boston, MA 02115 USA. \\ (e-mail: m.siami@northeastern.edu)}
\thanks{This research was supported in part by a Vannevar Bush Fellowship from the Office of Secretary of Defense, DARPA Lagrange, and ARO MURI W911NF-12-1-0509. }

\begin{abstract}           
This paper studies a strategic model of marketing and product diffusion in social networks. We consider two firms
offering substitutable products which can improve their market share by seeding the key individuals in the market. Consumers update their consumption level for each of the two products as the best response to the consumption of their neighbors in the previous period. This results in linear update dynamics for the product consumption. Each consumer receives externality from the consumption of each neighbor where the strength of the externality is higher for consumption of the products of the same firm.  
We represent the above setting as a duopoly game between the firms and introduce a novel framework that allows for sparse seeding to emerge as an equilibrium strategy.
We then study the effect of the network structure on the optimal seeding strategies and the extent to which the strategies can be sparsified.  In particular, we derive conditions under which near Nash equilibrium strategies can asymptotically lead to sparse seeding in large populations. The results are illustrated using a core-periphery network.
\end{abstract}

\begin{keyword}
Network, Sparsification, Seeding, Game Theory. 
\end{keyword}

\end{frontmatter}

\section{Introduction}
\allowdisplaybreaks

Over the past few years, the problem of influence and spread in networks has been subject
to intense study \citep{ballester2006s,bharathi2007competitive,galeotti2009influencing,kempe2003maximizing,kempe2005influential,chasparis2010control,vetta2002nash}. 
Furthermore, modeling and analysis of the spread of new  strategies and behaviors via local coordination games has been an ongoing field of research
\citep{ellison1993learning,kandori1993learning,harsanyi1988general,young1993evolution,young2001individual,young2002diffusion,montanari2010spread,kleinberg2007cascading}. For example, in \cite{lopez2006contagion} the authors show that the contagion of an action in a random network depends on the distribution of the connectivities. In \cite{amini2009marketing}, authors provide an upper bound on the proportion of agents adopting a new product assuming a threshold model for product adoption.  \cite{montanari2010spread} studies the diffusion of innovation in social networks based on the dynamics of coordination games and shows that innovation spreads much more slowly on well-connected network structures dominated by long-range links than in low-dimensional ones dominated by geographic proximity, contrasting  some earlier works on epidemic models (e.g., \citep{ganesh2005effect,draief2006thresholds}).

A game theoretic model of competition and product adoption has been proposed in \cite{goyal2012competitive}. The authors use the proposed model to come up with upper bounds on the price of anarchy and show how network structure may increase the gap between the initial budgets. Similarly, in \cite{bimpikiscompeting}, the authors propose a game theoretic model for competition between firms, where firms can target their marketing budgets toward attracting individual consumers embedded in a social network. They subsequently provide conditions under which it is optimal for the firms to asymmetrically target a subset of the individuals. As another relevant work, \citep{chasparis2010control} considers a dynamical model of preferences in a duopoly setting and characterize optimal policies for both finite and infinite time horizons, studying the effect of endogenous network influences as well as network uncertainties.
The equilibria of network games with linear best response dynamics have been completely characterized in \cite{bramoulle2009strategic}. Considering a monopoly setting, optimal  pricing policies are derived  in ~\cite{candogan2011optimal} assuming quadratic utility functions for the agents, which is a common theme in game-theoretic social network analysis as also previously used in \cite{ballester2006s, corbo2007importance} for instance.

Despite the tremendous development made in the past decade (see e.g., \citep{seeman2013adaptive,fazeli2017competitive,bharathi2007competitive,montanari2010spread,chasparis2010control}), influence maximization algorithms are typically devised under idealized assumptions regarding accessibility of individual consumers for targeted advertisement (e.g., via seeding) as well as the linearity of the cost function with respect to the size of the influence on individual's consumption behavior. In practice, however, firms often have direct access only to a small subset of consumers. In addition, the extent to which firms can influence consumption behavior of consumers is limited, no matter how much they spend on seeding/advertisement. This can be more formally stated as diminishing returns on changes in individual consumption levels, which we model by assuming a convex seeding cost function in our work.
It seems also quite compelling to investigate approaches that can support sparse seeding as (at least) near-optimal marketing strategies, in order to account for the limitations in directly accessing individuals especially in large populations.\footnote{By sparse seeding we mean only seeding a subset of consumers with an infinitesimal size compared to the size of the whole population.}

 In this paper, we study strategic competition between two firms seeking to maximize their product consumption in a network. The consumption of each product by each agent is the result of her myopic best response to the previous consumptions of her peers. A firm can thus improve its market share by targeting its advertising budget toward seeding key individuals in the network, whose consumption of the product can in turn incentivize their peers to consume more of the same product, subsequently affecting the consumption behavior of the individuals all over the network via inter-agent influences.  
 We model the problem above as a fixed-sum game between the two firms, where each firm tries to maximize a utility function which is a discounted sum of its product consumption over time less a seeding cost which is appropriately chosen to account for the diminishing return in seeding/advertisement budget as well. We characterize the unique Nash equilibrium of the resulting duopoly game in terms of the network structure and the market price. The resultant seeding strategy typically prescribes seeding all the agents (in an amount proportional to their influence), which is rather an infeasible task. As a remedy, we propose studying the $\varepsilon$-equilibria of the game and derive conditions under which such equilibria can asymptotically lead to sparse seeding strategies in large populations.
 
 The rest of this paper is organized as follows: in Section~\ref{sec::notations}, we present basic mathematical notations. In Section~\ref{sec::model}, we introduce our model and update dynamics for agents by applying the myopic best response. We then study the game played between the firms and how they decide to seed key individuals in Section~\ref{sec::duopolygame}. Next, we define a near-Nash equilibrium concept as a relaxation of the standard Nash equilibrium with the aim of expanding the equilibrium set to include sparse seeding strategies in Section~\ref{sec::sparse}. In Section~\ref{sec::asymp}, we consider the case of a large population and characterize network structures for which a pair of sparse seeding strategies can be asymptotically realized. The results are illustrated via an example  in Section~\ref{sec::discussion}. Finally, in Section~\ref{sec::conclusions}, we conclude the paper.
 
\section{Mathematical Notations}
\label{sec::notations}
\allowdisplaybreaks

	Throughout the paper,
the discrete time index is denoted by $k$. The sets of real (integer), positive real (integer), and strictly positive real (integer) numbers are represented by $\R$ ($\mathbb Z$), $\R_+$ ($\mathbb Z_+$) and $\R_{++}$ ($\mathbb Z_{++}$), respectively. The set of natural numbers $\{i \in \Z_{++} ~:~i \leq n\}$ is denoted by $[n]$. 
Bold letters, such as $\mathbf x$ or $\mathbf s$, stand for real-valued vectors. Capital letters, such as $A$ or $B$, stand for real-valued matrices. 
We use $\|\mathbf x\|_2$ to denote the $\ell^2$-norm of vector $\mathbf x$.
We denote the number of nonzero elements in vector $\mathbf x$ by $\|\mathbf x\|_0$. 
The $n$-by-$n$ identity matrix is denoted by $I_n$. Also, we represent the $n$-by-$1$ vector of ones by $\mathbf 1_n$ and the $n$-by-$1$ and $n$-by-$n$ matrices of zeros by $\mathbf 0_n$ and $\mathbf 0_{n\times n}$, respectively.
 The transpose of matrix $A$ is denoted by $A^\top$. 
For two matrices $A$ and $B$, 
we denote the Hadamard product by $A \circ B$ and the Kronecker product by 
$A\otimes B$.

\section{Spread Dynamics}
\label{sec::model}
We consider a social network consisting of a group of $n$ consumers (agents) denoted by $\mathcal V =[n]$. The relationship among agents is given by a weighted directed graph $\mathcal G =(\mathcal V, \mathcal E, w)$. The weighted adjacency matrix of $\mathcal G$ is denoted by $G$ where its $i,j$-th entry denoted by $g_{ij} = w((i,j))$ if $(i,j)\in \mathcal E$ otherwise $g_{ij}=0$. Link weight $g_{ij}$ presents the strength of the influence of agent $j$ on $i$. 

We assume there are two competing firms producing product $a$ and $b$, respectively. Let $d_i^{\text{in}} =\sum_j g_{ij}$ and $d_i^{\text{out}} =\sum_j g_{ji}$ denote the in-degree and out-degree of node $i$, respectively.
{Assume $\bar x_i(k)$ and $\underline{x}_i(k)$ denote agent $i$'s consumption of product $a$ and $b$ at time $k\in\Z_{++}$, respectively.
The initial consumption of agent $i$ from each product is determined by the effort 
made by each firm in seeding/marketing its product to agent $i$ at time $0$, and are denoted by
$\bar{s}_i$ and $\underline{s}_i$. We refer to $\bar s_i$ and $\underline s_i$ as the control/seeding of firm $a$ and $b$ on agent $i$, respectively.  The relation between the controls and the states is motivated by \citep{chasparis2010control} on social networks. We consider a convex cost function $c(\mathbf s)$ for seeding to reflect the diminishing return of seeding/advertisement budget on changing the consumption behavior in the population. The cost $c(\mathbf s)$ reflects the monetary value required to increase the consumption of the agents in the population by amount $\mathbf s$. For the sake of simplicity, we develop our results for a quadratic cost function of the form $c(\mathbf s)=\frac{1}{2}\|\mathbf s\|_2^2$.}

The total utility of agent $i$ from taking action $x_i$ is given by
\begin{align}
{\mathfrak{u}}_i(\bar x_i(k), \bar{\mathbf x}_{-i}(k)) &= \alpha \bar x_i(k) - \frac{1}{2} \left(\bar x_i(k)\right)^2\\
& + ~\bar x_i(k) \sum_{j \sim i} g_{ij}\left( \bar x_j(k) + \beta {\underline x}_j(k) \right)\\
& - ~p \,\bar x_i(k)
\end{align}
where $0 \leq \beta < 1$. In the above equation $\bar{\mathbf x}_{-i}$ denotes an action vector of all agents but agent $i$. The first part of the utility ($\alpha \bar x_i(k) - \frac{1}{2} \left(\bar x_i(k)\right)^2$) represents the normalized second order approximation of a concave self-utility function, as is commonly used in the literature. The fact that externality is weaker for consuming different products is captured by $\beta<1$. Finally, the price $p$ represents the common market price for the products.

\subsection{Myopic Best Response Dynamics}
We assume agents repeatedly apply myopic best response to the consumption levels of their neighbors in the previous stage to update their consumption of each product, that is,
\[ \bar{x}_i(k+1)~\in~\arg \max_{x \in \R_+^n} \mathfrak{u}_i \left (x, \bar{\mathbf x}_{-i}(k)\right),\]
and
\[ \underline{x}_i(k+1)~\in~\arg \max_{x \in \R_+^n} \mathfrak{u}_i\left(x, \underline{\mathbf x}_{-i}(k)\right).\]

Consumption levels for agent $i$ at time $k + 1$ thus find the following linear dynamics:
\[\bar x_i(k+1) = \left (\alpha - p\right) + \sum_{j \sim i} g_{ij}\left( \bar x_j(k) + \beta \underline x_j(k) \right). \]
and
\[\underline {x}_i(k+1) = \left (\alpha - p\right) + \sum_{j \sim i} g_{ij}\left( \underline{x}_j(k) + \beta \bar x_j(k) \right). \]
This results in the following closed-form update dynamics:
\begin{equation}
\begin{cases}
\bar{\mathbf x}(k+1)~=~(\alpha-p) \mathbf 1_n \,+\, G\bar{\mathbf x}(k)\,+\,\beta G\underline{\mathbf x}(k)\\
\underline{\mathbf x}(k+1)~=~(\alpha-p) \mathbf 1_n \,+\, G \underline{\mathbf x}(k)\,+\,\beta G \bar {\mathbf x}(k)
\end{cases}
\label{eq::1}
\end{equation}
where $\bar {\mathbf x}(0) = \bar {\mathbf s}$ and $\underline{\mathbf x}(0) = \underline {\mathbf s}$.

\begin{assumption}
\label{assum1}
We assume $\alpha\geq p$ to guarantee that $\bar x_i(k)$ and $\underline x_i(k)$ are non-negative for all feasible initial seedings.
\end{assumption}


In the next section, we show how firms can exploit the structure of the network to maximize their product consumption, and we then characterize the unique Nash equilibrium of the game played between these two firms.

We first recall the definition of the Katz-Bonacich centrality measure.
\begin{definition}
\label{def::centrality}
For a given attenuation factor $\alpha$ less than the reciprocal of the absolute value of the largest eigenvalue of the adjacency matrix $G \in \R_+^{n \times n}$, the Katz-Bonacich  centrality is given by
\[ {\boldsymbol {{\mathfrak c}_{\text{katz}}}}(G, \alpha)~=~\left(I_n-\alpha G^\top\right)^{-1}\mathbf 1_n.\]
\end{definition}

\begin{definition}
\label{def::one}
For a given subset $\mathcal S \subseteq \mathcal V$, we define an indicator binary $n \times 1$ vector $\mathbf 1_{\mathcal S}$ as follows:
\begin{equation}
\mathbf 1_{\mathcal S}=[s_i]_{n \times 1}, ~~~s_i =\begin{cases} 1~~\text{if}~i \in \mathcal S\\
0,~~i \in [n] \setminus \mathcal S\end{cases}
\end{equation}
\end{definition}

We then use these definitions in the next section to characterize  equilibrium strategies in a duopoly game between the firms.

\section{Optimal Seeding Strategies}\label{sec::duopolygame}

This section describes the game between two firms where each firm aims to maximize the consumption of its product over an {infinite} horizon.
Each firm can invest in promoting its product by seeding some of the agents.
Initial seeding could be viewed as free offers to promote the product in the social networks. We thus consider the problem of deriving optimal advertising policies for the spread of innovations/consumption in a network. In \cite{chasparis2010control}, an analytical solution to the optimal advertising problem in the absence of a competing firm is provided, and it is shown that the solution can be related to previously introduced centrality measures in sociology.

We define the utility of each firm as the discounted sum of its product consumption over time minus the squared norm of its seeding, which are defined formally as follows:
\begin{equation}
    \bar{\mathfrak{U}}~=~p \left( \sum_{k=1}^{\infty} \delta^k \mathbf 1_n ^\top \bar{\mathbf x}(k)\right)\, -\, \frac{1}{2}\|\bar{\mathbf s} \|^2_2,
    \label{eq:ua}
\end{equation}
and
\begin{equation}
    \underline{\mathfrak{U}}~=~p \left( \sum_{k=1}^{\infty} \delta^k \mathbf 1_n ^\top \underline{\mathbf x}(k)\right)\, -\, \frac{1}{2}\|\underline{\mathbf s} \|^2_2.
    \label{eq:ub}
\end{equation}

\begin{definition}
\label{def::nash}
A pair of seeding strategies $(\bar{\mathbf s}^\star, \underline{\mathbf s}^\star) \in \R_+^n \times \R_+^n$  is said to be a Nash equilibrium of the duopoly game described above if  none of the players can improve her payoff by unilaterally deviating from her strategy. That is,
\[\underline{\mathfrak{U}}(\bar{\mathbf s}^\star, \underline{\mathbf s}) ~\leq~ \,\underline{\mathfrak{U}}(\bar{\mathbf s}^\star, \underline{\mathbf s}^\star),~~\forall \,  \underline{\mathbf s} ~ \in ~\R^n_{+},  \]
and
\[\bar{\mathfrak{U}}(\bar{\mathbf s}, \underline{\mathbf s}^\star) ~\leq~\, \bar{\mathfrak{U}}(\bar{\mathbf s}^\star, \underline{\mathbf s}^\star), ~~\forall \,  \bar{\mathbf s} ~ \in ~\R^n_{+}.  \]
\end{definition}

{\begin{assumption}
\label{assum3}
We assume that the absolute value of the largest eigenvalue of $G$ is less than or equal to $\Lambda_{\max}$ where
\[0<\Lambda_{\max} < \frac{1}{\delta(1+\beta)}.\]
\end{assumption}}

We then use Assumption \ref{assum3} to get well-defined centrality measures for the graph with adjacency $G$, and make the matrix pencil in Definition \ref{def::centrality} invertible.

In the following lemma, an equilibrium strategy for the duopoly game between the firms is characterized based on the node centrality measures.

\begin{lemma}
\label{lemma}
Consider two firms with closed-form update dynamics \ref{eq::1}, and  utility functions $\bar {\mathfrak U}$ and $\underline{\mathfrak U}$ given by \eqref{eq:ua} and \eqref{eq:ub}, respectively. The sensitivity of the utilities with respect to the individual seedings are given by
\begin{equation}
\frac{\partial \bar{\mathfrak U}}{\partial \bar s_i} ~=~ p \, c_i - \bar s_i,
\end{equation}
and
\begin{equation}
\frac{\partial \underline{\mathfrak U}}{\partial \underline{s}_i} ~=~ p \, c_i - \underline{s}_i,
\end{equation}
where ${\mathfrak c}_{\text{new}} =[c_1, \ldots, c_n]^\top$ with
\begin{equation}
{\mathfrak c}_{\text{new}} ~=~\frac{1}{2}\boldsymbol{{\mathfrak c}_{\text{katz}}} \left (G, \delta (1-\beta)\right)+\frac{1}{2}\boldsymbol{{\mathfrak c}_{\text{katz}}}\left (G, \delta (1+\beta)\right).
\label{eq:c}
\end{equation}
\end{lemma}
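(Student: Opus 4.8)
The plan is to compute the two gradients directly from the solved trajectories. The seeding cost $-\tfrac12\|\bar{\mathbf s}\|_2^2$ contributes $-\bar s_i$ to $\partial\bar{\mathfrak U}/\partial\bar s_i$ at once, so the entire content of the lemma lies in showing that $p\sum_{k=1}^{\infty}\delta^{k}\mathbf 1_n^\top\,\partial\bar{\mathbf x}(k)/\partial\bar s_i$ equals $p\,c_i$. Because the two update equations in \eqref{eq::1} are interchanged under swapping the roles of the two firms, it suffices to treat $\bar{\mathfrak U}$; the expression for $\underline{\mathfrak U}$ then follows verbatim with $\bar s_i$ replaced by $\underline s_i$.

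First I would decouple \eqref{eq::1}. Stacking the two states, the update operator is $M\otimes G$ with $M=\begin{bmatrix}1&\beta\\\beta&1\end{bmatrix}$; equivalently, introducing the sum and difference variables $\mathbf y=\bar{\mathbf x}+\underline{\mathbf x}$ and $\mathbf z=\bar{\mathbf x}-\underline{\mathbf x}$ yields two independent affine recursions
\begin{align}
\mathbf y(k+1)&=2(\alpha-p)\mathbf 1_n+(1+\beta)G\,\mathbf y(k),\\
\mathbf z(k+1)&=(1-\beta)G\,\mathbf z(k),
\end{align}
with $\mathbf y(0)=\bar{\mathbf s}+\underline{\mathbf s}$ and $\mathbf z(0)=\bar{\mathbf s}-\underline{\mathbf s}$. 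The eigenvalues $1\pm\beta$ of $M$ are exactly the two factors multiplying $G$ that will reappear as the attenuation factors in \eqref{eq:c}. Since the constant forcing term is independent of the seeding and since $\partial\mathbf y(0)/\partial\bar s_i=\partial\mathbf z(0)/\partial\bar s_i=\mathbf e_i$ (the $i$-th standard basis vector), differentiating the solved recursions gives the clean sensitivity
\[\frac{\partial\bar{\mathbf x}(k)}{\partial\bar s_i}=\tfrac12\Big(\tfrac{\partial\mathbf y(k)}{\partial\bar s_i}+\tfrac{\partial\mathbf z(k)}{\partial\bar s_i}\Big)=\tfrac12\big[(1+\beta)^k+(1-\beta)^k\big]G^k\mathbf e_i,\]
which is notably independent of the opponent's seeding $\underline{\mathbf s}$, explaining why the final first-order condition decouples across firms.

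Next I would substitute this into the discounted sum and interchange differentiation with the infinite summation. This is legitimate because Assumption~\ref{assum3} forces $\rho\big(\delta(1\pm\beta)G\big)\le\delta(1\pm\beta)\Lambda_{\max}<1$, so both matrix series converge geometrically. Summing, each expression of the form $p\sum_k\big(\delta(1\pm\beta)\big)^k\mathbf 1_n^\top G^k\mathbf e_i$ collapses to $p\,\mathbf 1_n^\top\big(I_n-\delta(1\pm\beta)G\big)^{-1}\mathbf e_i$, the Katz--Bonacich resolvent of Definition~\ref{def::centrality}. Transposing these scalars to match the column-vector convention $\boldsymbol{{\mathfrak c}_{\text{katz}}}(G,\alpha)=(I_n-\alpha G^\top)^{-1}\mathbf 1_n$, the half-sum of the two resolvent entries is precisely $c_i$ as in \eqref{eq:c}, and appending $-\bar s_i$ from the cost finishes the identity.

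The hard part will be bookkeeping rather than anything conceptual: one must (i) handle the summation index with care so that the geometric series reproduces the full resolvent $\sum_{k\ge0}(\delta(1\pm\beta)G)^k$ consistent with the contribution of the initial consumption, and (ii) correctly transpose the resolvent acting on $\mathbf e_i$ into the Katz form acting on $\mathbf 1_n$. The spectral-radius bound from Assumption~\ref{assum3} is exactly what renders every interchange of limits and the inversion of the matrix pencil valid; without it neither the utilities in \eqref{eq:ua}--\eqref{eq:ub} nor their gradients would be well defined.
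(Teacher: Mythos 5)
Your plan is correct and takes essentially the same route as the paper: your sum/difference coordinates $\mathbf y=\bar{\mathbf x}+\underline{\mathbf x}$, $\mathbf z=\bar{\mathbf x}-\underline{\mathbf x}$ are exactly the explicit diagonalization of the $2\times 2$ coupling matrix (eigenvalues $1\pm\beta$) that the paper performs implicitly through the Kronecker identity $\mathcal A^t = M^t\otimes (G^\top)^t$ and the scalar computation $\frac{1}{2}\left((1-\beta)^t+(1+\beta)^t\right)$, after which both arguments sum the geometric series into the two Katz--Bonacich resolvents and append the $-\bar s_i$ from the quadratic cost. The $k=0$ index bookkeeping you flag is a genuine subtlety (the utility sums from $k=1$ while $c_i$ is the full resolvent), but the paper's own derivation passes over the same point, so your proposal matches its proof in substance.
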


\begin{proof}
Let us define the following Jacobian matrices for firm $a$ as follows:
\begin{equation}
\bar{\boldsymbol{\mathfrak x}}(k) := \frac{\partial (\bar{\mathbf x} (k))}{\partial \bar{\mathbf s}},
\label{eq:308}
\end{equation}
and
\begin{equation}
 \bar {\boldsymbol{\mathfrak X}}(k) := \frac{\partial \sum_{t=1}^{k} \delta^t  \bar{\mathbf x}(t)}{\partial \bar {\mathbf s}}= \sum_{t=1}^{k} \delta^t \bar{\mathfrak x}(t).
\label{eq:292}
\end{equation}
Similarly, we define  the following Jacobian matrices for firm $b$:
\begin{equation}
\underline{\boldsymbol{\mathfrak x}}(k) := \frac{\partial ( \underline{\mathbf x} (k))}{\partial \bar{\mathbf s}},
\label{eq:323}
\end{equation}
and
\begin{equation}
 \underline{\boldsymbol{\mathfrak X}}(k) ~:=~\frac{\partial \sum_{t=1}^{k} \delta^t \underline{\mathbf x}(t)}{\partial \bar{\mathbf s}}= \sum_{t=1}^{k} \delta^t \underline{\mathfrak x}(t).
 \label{eq:303}
 \end{equation}
Next, we write the update dynamics for $\bar{\boldsymbol{\mathfrak x}}$ and $\underline{\boldsymbol{\mathfrak x}}$ according to their definitions and update dynamics \eqref{eq::1}:
\begin{equation}
\begin{cases}
\bar{\boldsymbol{\mathfrak x}}(k+1)~=~\bar{\boldsymbol{\mathfrak x}}(k) \, G^\top +  \underline {\boldsymbol{\mathfrak x}}(k) \,  \beta G^\top \\
\underline{\boldsymbol{\mathfrak x}}(k+1)~=~  \underline{\boldsymbol{\mathfrak x}}(k) \, G^\top+  \bar{\boldsymbol{\mathfrak x}}(k) \, \beta G^\top
\end{cases}
\label{eq:316}
\end{equation}
Then, based on update dynamics \eqref{eq:316}, \eqref{eq:292} and \eqref{eq:303} we get
\begin{equation}
\begin{cases}
\bar{\boldsymbol{\mathfrak X}}(\infty) - \bar {\boldsymbol{\mathfrak x}}(0)~=~\delta  \bar{\boldsymbol{\mathfrak X}}(\infty) \, G^\top +  \delta \underline {\boldsymbol{\mathfrak X}}(\infty) \, \beta G^\top\\
\underline{\boldsymbol{\mathfrak X}}(\infty) - \underline{\boldsymbol{\mathfrak x}}(0)~=~\delta  \underline{\boldsymbol{\mathfrak X}}(\infty) \, G^\top + \delta \bar{\boldsymbol{\mathfrak X}}(\infty) \, \beta  G^\top
\end{cases}
\end{equation}
where
$ \bar {\boldsymbol{\mathfrak{x}}} (0) = I_n$,
and
$\underline {\boldsymbol{\mathfrak{x}}} (0) = \mathbf 0_{n\times n}$.
We can rewrite \eqref{eq:325} in the following compact form
\begin{eqnarray}
&&\begin{bmatrix}
\bar{\boldsymbol{\mathfrak X}}(\infty)&
\underline{\boldsymbol{\mathfrak X}}(\infty)
\end{bmatrix}
\left (I_{2n} - \delta \begin{bmatrix}
G^\top & \beta G^\top\\
\beta G^\top & G^\top
\end{bmatrix}\right)\nonumber \\  &&~~~~~~~~~~~~~~ = \begin{bmatrix}
I_{n}&
\mathbf 0_{n \times n}
\end{bmatrix}.
\label{eq:325}
\end{eqnarray}
From this, it follows that
\begin{small}
\begin{equation}
\bar{\boldsymbol{\mathfrak X}}(\infty)=  \begin{bmatrix}
I_n & \mathbf 0_{n\times n}
\end{bmatrix}\left (I_{2n}- \delta \begin{bmatrix}
G^\top & \beta G^\top\\
\beta G^\top & G^\top
\end{bmatrix}\right)^{-1} \begin{bmatrix}
I_n\\
\mathbf 0_{n \times n}
\end{bmatrix}.
\label{eq:339}
\end{equation}
\end{small}
 We then use the following property to simplify \eqref{eq:339}
 \begin{equation}
({A} \otimes {B})({C} \otimes {D}) ~=~ (A{C}) \otimes (B{D}).
 \label{eq:hadam}
 \end{equation}
Let us define
\begin{equation}
\mathcal A ~:=~ \begin{bmatrix}
G^\top & \beta G^\top\\
\beta G^\top & G^\top
\end{bmatrix} ~=~ \begin{bmatrix}
1 & \beta \\
\beta & 1
\end{bmatrix} \otimes G^\top.
\label{eq:A}
\end{equation}
based on \eqref{eq:hadam} and \eqref{eq:A}, we get
\begin{equation}
\mathcal A^t = \left(\begin{bmatrix}
1 & \beta \\
\beta & 1
\end{bmatrix} \otimes G^\top \right)^t = \begin{bmatrix}
1 & \beta \\
\beta & 1
\end{bmatrix}^t \otimes (G^\top)^t.
\label{eq:At}
\end{equation}
With a simple calculation we get:
\begin{equation}
 \begin{bmatrix}
1 & 0
\end{bmatrix}\begin{bmatrix}
1 & \beta \\
\beta & 1
\end{bmatrix}^t \begin{bmatrix}
1  \\
0
\end{bmatrix}=\frac{1}{2} \left((1-\beta)^t+(1+\beta)^t\right).
\label{eq:411}
\end{equation}
Next by expanding $(I - \delta \mathcal A)^{-1}$ and then applying \eqref{eq:hadam}, \eqref{eq:At}, and \eqref{eq:411}, it follows that
\begin{eqnarray*}
&& \begin{bmatrix}
I_n & \mathbf 0_{n \times n}
\end{bmatrix}\left (I_{2n}- \delta \begin{bmatrix}
G^\top & \beta G^\top\\
\beta G^\top & G^\top
\end{bmatrix}\right)^{-1} \begin{bmatrix}
\mathbf 1_n\\
\mathbf 0_n
\end{bmatrix} \\
&&~~~~~=~\frac{1}{2}\Big(I_n - \delta (1-\beta) G^\top \Big)^{-1}\mathbf 1_{n} \\
&&~~~~~~~~~+~ \frac{1}{2}\Big(I_n - \delta (1+\beta) G^\top \Big)^{-1} \mathbf 1_{n}\\
&&~~~~~=~\frac{1}{2}\boldsymbol{{\mathfrak c}_{\text{katz}}} \left (G, \delta (1-\beta)\right)+\frac{1}{2}\boldsymbol{{\mathfrak c}_{\text{katz}}}\left (G, \delta (1+\beta)\right).
\label{eq:405}
\end{eqnarray*}
By substituting \eqref{eq:339} in \eqref{eq:375}, we have
\begin{equation}
\boldsymbol{\mathfrak X}(\infty) \mathbf 1_n ~=~\frac{1}{2}\boldsymbol{{\mathfrak c}_{\text{katz}}} \left (G, \delta (1-\beta)\right)+\frac{1}{2}\boldsymbol{{\mathfrak c}_{\text{katz}}}\left (G, \delta (1+\beta)\right).
\label{eq:430}
\end{equation}
From \eqref{eq:ua}, we get
\begin{equation}
\frac{\partial \bar{\mathfrak U}}{\partial \bar{\mathbf s}} ~=~ p \, \bar{\boldsymbol{\mathfrak X}}(\infty) \mathbf 1_n - \bar {\mathbf s}.
\label{eq:375}
\end{equation}
Finally, by substituting \eqref{eq:430} in \eqref{eq:375}, we get the desired result.
\end{proof}

\begin{theorem}
\label{theorem::Nash}
Consider two firms with closed-form update dynamics \eqref{eq::1}, and utility functions $\bar {\mathfrak U}$ and $\underline{\mathfrak U}$ given by \eqref{eq:ua} and \eqref{eq:ub}, respectively. The game between firms admits the unique symmetric  Nash equilibrium of the form
\begin{equation}
\bar{\mathbf s}^\star=\underline{\mathbf s}^\star= {p} {\mathfrak c}_{\text{new}}, 
\end{equation}
where bi-product centrality vector ${\mathfrak c}_{\text{new}}$ is given by \eqref{eq:c}.
\end{theorem}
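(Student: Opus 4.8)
The plan is to build directly on Lemma~\ref{lemma}, which supplies both gradients, and to exploit the structural observation that these gradients are \emph{decoupled}: $\frac{\partial \bar{\mathfrak U}}{\partial \bar{\mathbf s}} = p\,{\mathfrak c}_{\text{new}} - \bar{\mathbf s}$ depends only on firm $a$'s own seeding and not on $\underline{\mathbf s}$, and symmetrically $\frac{\partial \underline{\mathfrak U}}{\partial \underline{\mathbf s}} = p\,{\mathfrak c}_{\text{new}} - \underline{\mathbf s}$. This is because the linearity of \eqref{eq::1} renders the Jacobians $\bar{\boldsymbol{\mathfrak X}}(\infty)$ and $\underline{\boldsymbol{\mathfrak X}}(\infty)$ constant (independent of the seeding magnitudes), so each firm in fact possesses a dominant strategy and the duopoly decouples into two identical single-firm optimizations.

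First I would establish strict concavity of each payoff in its own argument: differentiating the gradient of Lemma~\ref{lemma} once more gives Hessian $-I_n$ for each firm, so for any fixed opponent strategy the best response is the \emph{unique} maximizer over the closed convex set $\R_+^n$, characterized by the KKT conditions.

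Next I would solve the stationarity condition. Setting $\frac{\partial \bar{\mathfrak U}}{\partial \bar s_i}=0$ yields the unconstrained critical point $\bar{\mathbf s} = p\,{\mathfrak c}_{\text{new}}$, and I must verify this point is feasible, i.e.\ that ${\mathfrak c}_{\text{new}} \geq \mathbf 0_n$. This is where Assumption~\ref{assum3} does the work: the bound $\Lambda_{\max} < \frac{1}{\delta(1+\beta)}$ places both attenuation factors $\delta(1-\beta)$ and $\delta(1+\beta)$ strictly below the reciprocal spectral radius of $G$, so each Katz--Bonacich vector in \eqref{eq:c} admits the convergent Neumann expansion $\sum_{t\geq 0}\bigl(\delta(1\pm\beta)G^\top\bigr)^t \mathbf 1_n$; since $G \in \R_+^{n\times n}$ and the leading ($t=0$) term of each series is $\mathbf 1_n$, we obtain ${\mathfrak c}_{\text{new}} \geq \mathbf 1_n > \mathbf 0_n$. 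With $p>0$, the critical point $p\,{\mathfrak c}_{\text{new}}$ therefore lies strictly in the interior of $\R_+^n$, the nonnegativity constraints are inactive, and the unconstrained optimum coincides with the constrained one; the identical computation gives $\underline{\mathbf s} = p\,{\mathfrak c}_{\text{new}}$.

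Finally, uniqueness and the equilibrium property follow at once from the decoupling. Each best-response map is constant in the opponent's strategy, so the profile $(\bar{\mathbf s}^\star,\underline{\mathbf s}^\star) = (p\,{\mathfrak c}_{\text{new}},\, p\,{\mathfrak c}_{\text{new}})$ is the only pair satisfying both first-order conditions, strict concavity excludes any other maximizer, and hence no unilateral deviation can raise either payoff, matching Definition~\ref{def::nash}. The main obstacle I anticipate is not the optimization itself, which is essentially quadratic, but the analytic justification that the infinite-horizon utilities and their term-by-term derivatives are well defined; Assumption~\ref{assum3} is precisely what guarantees convergence of $\sum_{k} \delta^k \mathbf 1_n^\top \bar{\mathbf x}(k)$ and simultaneously the nonnegativity that keeps the equilibrium inside $\R_+^n$.
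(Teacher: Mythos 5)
Your proposal is correct and follows the same route as the paper, which simply states that the theorem is a direct consequence of Lemma~\ref{lemma}; you have filled in the routine details (decoupled gradients, strict concavity with Hessian $-I_n$, and feasibility of the critical point via the nonnegative Neumann expansion under Assumption~\ref{assum3}) that the paper leaves implicit. Your observation that each firm's best response is constant in the opponent's strategy is exactly why the equilibrium is unique and why the one-line proof in the paper suffices.
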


\begin{proof}
The proof is a direct consequence of Lemma \ref{lemma}.
\end{proof}

\section{Sparse Seeding}
\label{sec::sparse}

In what follows, we first define a near-Nash equilibrium concept as a relaxation of the standard Nash equilibrium defined in Definition~\ref{def::nash}, with the aim of expanding the equilibrium set to include sparse seeding strategies (cf. \citep{daskalakis2006note}). In what follows, we will make this statement
formal.

\begin{definition}
\label{def::varnash}
Given $\varepsilon \in \R_{+}$, a pair of seeding strategies $(\bar{\mathbf s}^\star, \underline{\mathbf s}^\star) \in \R_+^n \times \R_+^n$  is said to be $\varepsilon$-equilibrium of the duopoly game described in Section~\ref{sec::duopolygame} if  none of the players can improve her payoff by an amount more than $\varepsilon$ fraction of her current payoff,
by unilaterally deviating from her strategy. That is,
\[\underline{\mathfrak{U}}(\bar{\mathbf s}^\star, \underline{\mathbf s}) ~\leq~ (1+\varepsilon)\, \underline{\mathfrak{U}}(\bar{\mathbf s}^\star, \underline{\mathbf s}^\star), ~~\forall \,  \underline{\mathbf s} ~ \in ~\R^n_{+}, \]
and
\[\bar{\mathfrak{U}}(\bar{\mathbf s}, \underline{\mathbf s}^\star) ~\leq~  (1+\varepsilon)\, \bar{\mathfrak{U}}(\bar{\mathbf s}^\star, \underline{\mathbf s}^\star), ~~\forall \,  \bar{\mathbf s} ~ \in ~\R^n_{+}. \]
\end{definition}

Every Nash Equilibrium is equivalent to a $\varepsilon$-equilibrium where $\varepsilon =0$.

The next lemma provides a closed-form expression for the utility functions of the firms when there is not any seeding.

\begin{lemma}
\label{lemma:2}
Assume $\bar{\mathbf s}=\underline{\mathbf s}=\mathbf 0_n$, then the utility of firm $a$ is reduced to
\[ {\bar {\mathfrak U}(\mathbf 0_n,\mathbf 0_n)  ~=~\frac{(\alpha p-p^2)\delta}{1-\delta} \, \mathbf 1_n^{\top}\boldsymbol{{\mathfrak c}_{\text{katz}}}(G,\delta(1+\beta))},\]
where centrality $\boldsymbol{{\mathfrak c}_{\text{katz}}}(G,\cdot)$ is given by Definition \ref{def::centrality}.
\end{lemma}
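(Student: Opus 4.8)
The plan is to set both seedings to zero, collapse the coupled two-product dynamics to a single recursion by a symmetry argument, solve that recursion in closed form, and then evaluate the discounted infinite sum through a Neumann-series identity that reproduces the Katz--Bonacich centrality.

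First I would note that with $\bar{\mathbf s}=\underline{\mathbf s}=\mathbf 0_n$ the cost term $\frac12\|\bar{\mathbf s}\|_2^2$ in \eqref{eq:ua} vanishes, so it remains only to evaluate $p\sum_{k=1}^\infty \delta^k \mathbf 1_n^\top \bar{\mathbf x}(k)$. The key observation is that the initial conditions $\bar{\mathbf x}(0)=\underline{\mathbf x}(0)=\mathbf 0_n$ coincide and the update map \eqref{eq::1} is symmetric under swapping the two products; a one-line induction then yields $\bar{\mathbf x}(k)=\underline{\mathbf x}(k)$ for every $k$. Writing $\mathbf x(k)$ for this common vector, the two coupled recursions in \eqref{eq::1} collapse to the single linear recursion $\mathbf x(k+1)=(\alpha-p)\mathbf 1_n+(1+\beta)G\,\mathbf x(k)$ with $\mathbf x(0)=\mathbf 0_n$.

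Next I would solve this recursion explicitly as $\mathbf x(k)=(\alpha-p)\sum_{j=0}^{k-1}\big((1+\beta)G\big)^j\mathbf 1_n$, substitute it into the discounted sum, and interchange the order of summation over $k$ and $j$. After the interchange, the sum over $k$ is geometric and contributes a factor $\delta^{j+1}/(1-\delta)$, leaving $\frac{(\alpha-p)\delta}{1-\delta}\,\mathbf 1_n^\top\big(\sum_{j=0}^\infty(\delta(1+\beta)G)^j\big)\mathbf 1_n$. The Neumann series sums to $(I_n-\delta(1+\beta)G)^{-1}$, and since the scalar $\mathbf 1_n^\top(I_n-\delta(1+\beta)G)^{-1}\mathbf 1_n$ equals its own transpose, it coincides with $\mathbf 1_n^\top(I_n-\delta(1+\beta)G^\top)^{-1}\mathbf 1_n=\mathbf 1_n^\top\boldsymbol{{\mathfrak c}_{\text{katz}}}(G,\delta(1+\beta))$ by Definition \ref{def::centrality}. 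Multiplying by $p$ and using $p(\alpha-p)=\alpha p-p^2$ then gives the claimed expression.

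The main technical point — more a matter of rigor than a genuine obstacle — is justifying the rearrangement of the double series and the convergence of the Neumann series. Both follow from Assumption \ref{assum3}: since the spectral radius of $G$ is at most $\Lambda_{\max}<\frac{1}{\delta(1+\beta)}$, the matrix $\delta(1+\beta)G$ has spectral radius strictly below one, which makes $I_n-\delta(1+\beta)G$ invertible, guarantees absolute convergence of $\sum_j(\delta(1+\beta)G)^j$, and thereby legitimizes the Fubini-type interchange of summations used above.
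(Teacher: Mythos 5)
Your proof is correct. The computation checks out: the symmetry induction gives $\bar{\mathbf x}(k)=\underline{\mathbf x}(k)$ for all $k$ since both start at $\mathbf 0_n$ and the update \eqref{eq::1} is invariant under swapping the two products, the collapsed recursion $\mathbf x(k+1)=(\alpha-p)\mathbf 1_n+(1+\beta)G\mathbf x(k)$ is solved correctly, the interchange of summations produces $\frac{(\alpha-p)\delta}{1-\delta}\mathbf 1_n^\top(I_n-\delta(1+\beta)G)^{-1}\mathbf 1_n$, and the transpose-of-a-scalar identity legitimately converts $G$ into $G^\top$ so as to match Definition~\ref{def::centrality}. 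Your route differs from the paper's in mechanics: the paper keeps the stacked $2n$-dimensional system, writes the discounted sum as $\frac{\delta(\alpha-p)}{1-\delta}(I_{2n}-\delta\mathcal A)^{-1}[\mathbf 1_n;\mathbf 1_n]$, and exploits the Kronecker structure $\mathcal A=\left[\begin{smallmatrix}1&\beta\\ \beta&1\end{smallmatrix}\right]\otimes G^\top$ together with the scalar identity $[1\;0]\left[\begin{smallmatrix}1&\beta\\ \beta&1\end{smallmatrix}\right]^k[1\;1]^\top=(1+\beta)^k$ to extract the factor $(1+\beta)$. That identity is just the statement that $[1\;1]^\top$ is an eigenvector of the $2\times 2$ coupling matrix with eigenvalue $1+\beta$, which is the resolvent-level counterpart of your state-space symmetry argument; so the two proofs rest on the same underlying fact. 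What your version buys is elementarity --- no $2n\times 2n$ inverse or Kronecker-product algebra, and an explicit treatment of the $G$ versus $G^\top$ discrepancy that the paper's proof glosses over (it reuses $\mathcal A$ from \eqref{eq:A}, defined with $G^\top$, even though the forward dynamics \eqref{eq::1} involve $G$). What the paper's version buys is reuse of the machinery already set up for Lemma~\ref{lemma}, where the two products have different initial conditions and the symmetry collapse is unavailable. Your convergence justification via Assumption~\ref{assum3} (spectral radius of $\delta(1+\beta)G$ strictly below one, hence geometric decay of $\|(\delta(1+\beta)G)^j\|$ and a valid Fubini interchange, with $\delta<1$ handling the outer geometric sum) is also sound.
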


\begin{proof}
Let us assume $\bar{\mathbf s}=\underline{\mathbf s}=\mathbf 0_n$, then
\begin{equation}
\sum_{k=1}^{\infty}\delta^k\begin{bmatrix}
\bar{\mathbf x}(k)\\  \underline{\mathbf x}(k)
\end{bmatrix}~=~\frac{\delta(\alpha-p)}{1-\delta} \left (I_{2n} - \delta \mathcal A \right)^{-1}\begin{bmatrix}
\mathbf 1_n\\  \mathbf 1_n
\end{bmatrix},
\end{equation}
where $\mathcal A$ is given by \eqref{eq:A}. 
With a simple calculation we get:
\begin{equation}
 \begin{bmatrix}
1 & 0
\end{bmatrix}\begin{bmatrix}
1 & \beta \\
\beta & 1
\end{bmatrix}^k \begin{bmatrix}
1  \\
1
\end{bmatrix}~=~(1+\beta)^k.
\end{equation}
Then, based on the definition of the utility, we have
\begin{eqnarray*}
\bar {\mathfrak U}(\mathbf 0_n,\mathbf 0_n) &=&\frac{\delta(\alpha-p)}{1-\delta} \begin{bmatrix}
\mathbf 1&  \mathbf 0
\end{bmatrix} \left (I_{2n} - \delta \mathcal A \right)^{-1}\begin{bmatrix}
\mathbf 1\\  \mathbf 1
\end{bmatrix}\\
&=& p\frac{\delta(\alpha-p)}{1-\delta} \mathbf 1_n^{\top}\boldsymbol{{\mathfrak c}_{\text{katz}}}(G,\delta(1+\beta)).
\end{eqnarray*}
\end{proof}

{
\begin{theorem}Consider two firms with closed-form update dynamics \eqref{eq::1}, and utility functions $\bar {\mathfrak U}$ and $\underline{\mathfrak U}$ given by \eqref{eq:ua} and \eqref{eq:ub}, respectively.  For any given sets $\bar{\mathcal S}, \underline{\mathcal S} \subseteq [n]$, the game between the firms admits a $\varepsilon$-equilibrium of the form
\begin{equation}
\bar{\mathbf s}^\star~=~ p \, ({\mathfrak c}_{\text{new}}\, \circ \,  \mathbf 1_{\bar{\mathcal S}}),~~{\text{and}}~~\underline{\mathbf s}^\star~=~ p \, ({\mathfrak c}_{\text{new}}\, \circ \,  \mathbf 1_{\underline {\mathcal S}}), \label{eq::enash}
\end{equation}
where $\mathbf c_{\text{new}}$ is given by \eqref{eq:c}, if and only if
\begin{equation}
  \varepsilon ~\geq~ \max (\bar \tau, \underline \tau ),
   \label{eq::var}
\end{equation}
where
\[\bar \tau~:=~\frac{  \sum_{i \notin \bar{ \mathcal S}} c_i^2 }{\frac{\delta(\alpha-p)}{2p(1-\delta)} \mathbf 1_n^{\top}\boldsymbol{{\mathfrak c}_{\text{katz}}}(G,\delta(1+\beta))+ \sum_{i \in \bar {\mathcal S}} c_i^2},\]
\[\underline \tau~:=~\frac{  \sum_{i \notin \underline {\mathcal S}} c_i^2 }{\frac{\delta(\alpha-p)}{2p(1-\delta)} \mathbf 1_n^{\top}\boldsymbol{{\mathfrak c}_{\text{katz}}}(G,\delta(1+\beta))+ \sum_{i \in \underline {\mathcal S}} c_i^2},\]
and centrality $\boldsymbol{{\mathfrak c}_{\text{katz}}}$ is given in Definition \ref{def::centrality}.
\end{theorem}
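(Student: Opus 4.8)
The plan is to reduce the $\varepsilon$-equilibrium condition of Definition~\ref{def::varnash} to one explicit scalar inequality per firm, and then combine the two. The structural fact I would exploit is that, by Lemma~\ref{lemma}, the gradient of $\bar{\mathfrak U}$ with respect to $\bar{\mathbf s}$ equals $p\,{\mathfrak c}_{\text{new}}-\bar{\mathbf s}$ and is \emph{independent} of the opponent's seeding $\underline{\mathbf s}$. Hence, for any fixed $\underline{\mathbf s}^\star$, the map $\bar{\mathbf s}\mapsto\bar{\mathfrak U}(\bar{\mathbf s},\underline{\mathbf s}^\star)$ is a strictly concave quadratic with constant Hessian $-I_n$ and unconstrained maximizer $p\,{\mathfrak c}_{\text{new}}$. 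Under Assumption~\ref{assum3} the Neumann series defining the two Katz--Bonacich vectors in \eqref{eq:c} converge and are entrywise positive, so $p\,{\mathfrak c}_{\text{new}}\in\R_{++}^n$ is feasible and is therefore the exact best response of firm $a$, regardless of $\underline{\mathbf s}^\star$; the same holds for firm $b$ by the $a\leftrightarrow b$ symmetry of \eqref{eq::1} and \eqref{eq:ua}--\eqref{eq:ub}.

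First I would compute the best-response gap exactly. Since the Hessian is constant, the second-order expansion about the maximizer is exact, giving $\bar{\mathfrak U}(\bar{\mathbf s},\underline{\mathbf s}^\star)=\bar{\mathfrak U}(p\,{\mathfrak c}_{\text{new}},\underline{\mathbf s}^\star)-\tfrac12\|\bar{\mathbf s}-p\,{\mathfrak c}_{\text{new}}\|_2^2$. Substituting the candidate $\bar{\mathbf s}^\star=p({\mathfrak c}_{\text{new}}\circ\mathbf 1_{\bar{\mathcal S}})$ and noting that $p\,{\mathfrak c}_{\text{new}}-\bar{\mathbf s}^\star=p\,({\mathfrak c}_{\text{new}}\circ(\mathbf 1_n-\mathbf 1_{\bar{\mathcal S}}))$ is supported on $[n]\setminus\bar{\mathcal S}$, the gap collapses to
\[
\max_{\bar{\mathbf s}\in\R_+^n}\bar{\mathfrak U}(\bar{\mathbf s},\underline{\mathbf s}^\star)-\bar{\mathfrak U}(\bar{\mathbf s}^\star,\underline{\mathbf s}^\star)=\tfrac12\,p^2\!\!\sum_{i\notin\bar{\mathcal S}}c_i^2 .
\]
This step is clean and does not involve $\underline{\mathbf s}^\star$.

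The substantive step is to evaluate the equilibrium-candidate payoff $\bar{\mathfrak U}(\bar{\mathbf s}^\star,\underline{\mathbf s}^\star)$ in closed form. Here I would use that $\sum_{k\ge1}\delta^k\mathbf 1_n^\top\bar{\mathbf x}(k)$ is affine in $(\bar{\mathbf s},\underline{\mathbf s})$, with constant term supplied by Lemma~\ref{lemma:2} and linear terms obtained from the same spectral/Kronecker decomposition of $\mathcal A$ used in Lemma~\ref{lemma} (splitting $\begin{bmatrix}1&\beta\\\beta&1\end{bmatrix}$ along its $1\pm\beta$ eigenspaces, cf.\ \eqref{eq:At}--\eqref{eq:411}). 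Plugging $\bar{\mathbf s}^\star$ and $\underline{\mathbf s}^\star$ into this affine form and collecting the self-seeding quadratic against the baseline should yield
\[
\bar{\mathfrak U}(\bar{\mathbf s}^\star,\underline{\mathbf s}^\star)=\tfrac12\,p^2\!\left(\tfrac{\delta(\alpha-p)}{2p(1-\delta)}\,\mathbf 1_n^\top\boldsymbol{{\mathfrak c}_{\text{katz}}}(G,\delta(1+\beta))+\sum_{i\in\bar{\mathcal S}}c_i^2\right),
\]
i.e.\ exactly $\tfrac12 p^2$ times the denominator of $\bar\tau$. I expect this to be the main obstacle: one must track the cross-firm externality contribution that $\underline{\mathbf s}^\star$ injects into $\bar{\mathbf x}$ and show that, together with the no-seeding baseline of Lemma~\ref{lemma:2}, it reduces to the stated $\tfrac{\delta(\alpha-p)}{2p(1-\delta)}\mathbf 1_n^\top\boldsymbol{{\mathfrak c}_{\text{katz}}}$ term; this is where the bookkeeping is delicate.

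Finally I would assemble the pieces. Since $\bar{\mathfrak U}(\bar{\mathbf s}^\star,\underline{\mathbf s}^\star)>0$ under Assumption~\ref{assum1}, the firm-$a$ inequality in Definition~\ref{def::varnash} is equivalent to $\tfrac12 p^2\sum_{i\notin\bar{\mathcal S}}c_i^2\le\varepsilon\,\bar{\mathfrak U}(\bar{\mathbf s}^\star,\underline{\mathbf s}^\star)$, which upon dividing by $\bar{\mathfrak U}(\bar{\mathbf s}^\star,\underline{\mathbf s}^\star)$ and using the closed form above becomes exactly $\varepsilon\ge\bar\tau$. By the $a\leftrightarrow b$ symmetry, firm $b$'s inequality becomes $\varepsilon\ge\underline\tau$. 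An $\varepsilon$-equilibrium requires both conditions simultaneously, and because every manipulation above is an equivalence, the pair \eqref{eq::enash} is an $\varepsilon$-equilibrium if and only if $\varepsilon\ge\max(\bar\tau,\underline\tau)$, which is precisely \eqref{eq::var}.
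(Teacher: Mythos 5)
Your overall route is the same as the paper's: use the gradient from Lemma~\ref{lemma} to see that $\bar{\mathbf s}\mapsto\bar{\mathfrak U}(\bar{\mathbf s},\underline{\mathbf s}^\star)$ is a quadratic with Hessian $-I_n$ and opponent-independent maximizer $p\,{\mathfrak c}_{\text{new}}$, compute the best-response gap $\tfrac12 p^2\sum_{i\notin\bar{\mathcal S}}c_i^2$ exactly, evaluate the candidate payoff via Lemma~\ref{lemma:2} and linearity, and divide. Your first step is actually done more carefully than in the paper (you check that the unconstrained maximizer is feasible in $\R_+^n$, which the paper silently assumes), and it reproduces the paper's equation for the gap exactly.

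The genuine gap is the step you yourself flag as ``delicate bookkeeping'': the identity
\[
\bar{\mathfrak U}(\bar{\mathbf s}^\star,\underline{\mathbf s}^\star)~=~\tfrac12\,p^2\!\left(\tfrac{\delta(\alpha-p)}{2p(1-\delta)}\,\mathbf 1_n^\top\boldsymbol{{\mathfrak c}_{\text{katz}}}(G,\delta(1+\beta))+\sum_{i\in\bar{\mathcal S}}c_i^2\right)
\]
is asserted (``should yield'') but never derived, and it is precisely this identity that produces the denominators of $\bar\tau$ and $\underline\tau$ --- i.e.\ it carries all the quantitative content of the theorem beyond the gap formula. To close it you would need (a) the cross-sensitivity $\partial\bar{\mathbf x}/\partial\underline{\mathbf s}$, which by the same Kronecker split of $\mathcal A$ is governed by $\tfrac12\bigl((1+\beta)^t-(1-\beta)^t\bigr)$ and hence equals $\tfrac12\bigl(\boldsymbol{{\mathfrak c}_{\text{katz}}}(G,\delta(1+\beta))-\boldsymbol{{\mathfrak c}_{\text{katz}}}(G,\delta(1-\beta))\bigr)$ in aggregate, and (b) an argument for why the resulting contribution of $\underline{\mathbf s}^\star$ to firm $a$'s payoff combines with the no-seeding baseline of Lemma~\ref{lemma:2} (whose constant is $\tfrac{\delta(\alpha-p)}{1-\delta}$, not $\tfrac{p\,\delta(\alpha-p)}{4(1-\delta)}$) into a term that is independent of $\underline{\mathcal S}$, as the stated $\bar\tau$ requires. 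Neither of these is routine, and it is not self-evident that the constants reconcile. In fairness, the paper's own proof is equally elliptical at exactly this point --- it invokes ``the superposition property'' and writes down the analogue of your identity without the cross-term --- so you have faithfully reproduced the paper's argument, including its weakest link; but as a self-contained proof your proposal does not yet establish the specific form of $\bar\tau$ and $\underline\tau$.
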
}

\begin{proof}
We first show that
\begin{equation}
    \max_{\bar{\mathbf s} \in \R_+^n} \left( \bar{\mathfrak{U}}(\bar{\mathbf s}, \underline{\mathbf s}^\star) - \bar{\mathfrak{U}}(\bar{\mathbf s}^\star, \underline{\mathbf s}^\star)\right) = \frac{1}{2}p^2\,\sum_{i \notin \mathcal S} c_i^2,
    \label{eq::eps}
    \end{equation}
    which results in the following best response for firm $a$
    \begin{equation}
    \bar{s_i} ~=~ p c_i,
    \label{eq:920}
    \end{equation}
where $i \in [n]$.
    Therefore, using \eqref{eq:920}, the superposition property, and Lemma \ref{lemma:2}, it follows that
    \begin{align}
    &\max_{\bar{\mathbf s} \in \R_+^n} \, \bar{\mathfrak{U}}(\bar{\mathbf s}, \underline{\mathbf s}^\star) ~=~ \bar {\mathfrak U}(\mathbf 0_n,\mathbf 0_n) + \frac{1}{2}p^2\,\sum_{i \notin \mathcal S} c_i^2 \\
    &~~~=\frac{(\alpha p-p^2)\delta}{1-\delta} \, \mathbf 1_n^{\top} \boldsymbol{{\mathfrak c}_{\text{katz}}}(G,\delta(1+\beta))+\frac{1}{2}p^2\,\sum_{i \notin \mathcal S} c_i^2.
    \label{eq:923}
    \end{align}
    Then using \eqref{eq::eps}, \eqref{eq:923} and Definition \ref{def::varnash}, we get the desired result.
\end{proof}




\section{Asymptotically Realizable Sparse Equilibria}
\label{sec::asymp}

In this section, we consider the case of a large population for which $n\to\infty$. We characterize network structures for which a pair of sparse seeding strategies can be realized as the limit of a sequence of $\varepsilon$-equilibria with $\epsilon\to0$. In what follows, we will make this statement
formal.

\begin{definition}
We call a pair of $\varepsilon$-equilibrium seeding strategies $(\bar {\mathbf s}^\star,\underline{\mathbf s}^\star)$ asymptotically sparse-realizable (ASR) if and only if $\|\bar{\mathbf s}^\star\|_0 = \mathcal O(1)$, $\|\underline{\mathbf s}^\star\|_0 = \mathcal O(1)$ and $\varepsilon = o(1)$.\footnote{Given functions $f(\cdot)$ and $g(\cdot)$, the asymptotic notations $f(n) = \mathcal O(g(n))$ and $f(n) = o(g(n))$ mean $\limsup_{n\to \infty} \left |\frac{f (n)}{g(n)} \right | < \infty$ and $\lim_{n\to \infty} \left |\frac{f (n)}{g(n)} \right | = 0$, respectively.}
\end{definition}

We begin our analysis by a lemma that presents a necessary and sufficient condition for a pair of strategies be asymptotically sparse-realizable in terms of the bi-product centrality \eqref{eq:c}.

{\begin{lemma}
\label{lemma:3}
A pair of strategies $(\bar {\mathbf s}^\star,\underline{\mathbf s}^\star)$  is asymptotically sparse-realizable if and only if 
\begin{equation}
\max \left \{ \frac{\sum_{i\notin \bar{\mathcal S}}c_i^2}{\sum_{i \in [n] }c_i^2}, \frac{\sum_{i\notin \underline {\mathcal S}}c_i^2}{\sum_{i \in [n] }c_i^2} \right \} ~=~o(1), 
\label{eq:iff}
\end{equation}
where $\mathbf c=[c_1, \cdots, c_n]^\top$ is given by \eqref{eq:c}, $\bar{\mathcal S} = \{i ~|~ \bar{s}_i^\star \neq 0\}$, $|\bar{\mathcal S}| = \mathcal O(1)$, $\underline{\mathcal S} = \{i ~|~ \bar{s}_i^\star \neq 0\}$, and $|\underline{\mathcal S}| = \mathcal O(1)$.
\end{lemma}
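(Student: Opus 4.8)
The plan is to reduce the biconditional to the $\varepsilon$-equilibrium characterization proved just above, and then to a purely asymptotic comparison of two ratios. By that preceding theorem, the sparse pair supported on $\bar{\mathcal S},\underline{\mathcal S}$ is an $\varepsilon$-equilibrium exactly when $\varepsilon\ge\max(\bar\tau,\underline\tau)$, so the smallest admissible tolerance for these supports is $\max(\bar\tau,\underline\tau)$ itself. Since the support sizes are assumed to be $\mathcal O(1)$, the $\|\cdot\|_0=\mathcal O(1)$ requirements in the definition of asymptotic sparse-realizability hold automatically, and the pair is ASR if and only if $\max(\bar\tau,\underline\tau)=o(1)$. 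Because the maximum of two nonnegative sequences vanishes iff each does, it suffices to show, for firm $a$ (the argument for firm $b$ being identical), that $\bar\tau=o(1)$ if and only if $\bar q=o(1)$, where $\bar q:=\big(\sum_{i\notin\bar{\mathcal S}}c_i^2\big)/\big(\sum_{i\in[n]}c_i^2\big)$ is the ratio appearing in \eqref{eq:iff}.

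For the estimates I would set $A:=\sum_{i\in\bar{\mathcal S}}c_i^2$, $B:=\sum_{i\notin\bar{\mathcal S}}c_i^2$, $\Sigma:=A+B=\sum_{i\in[n]}c_i^2$, and $C:=\tfrac{\delta(\alpha-p)}{2p(1-\delta)}\,\mathbf 1_n^{\top}\boldsymbol{{\mathfrak c}_{\text{katz}}}(G,\delta(1+\beta))\ge0$ (nonnegativity from Assumption~\ref{assum1} and $0<\delta<1$), so that $\bar\tau=B/(C+A)$ and $\bar q=B/(A+B)$. The single analytic input is a two-sided relation between $C$ and $\Sigma$. Expanding the Neumann series $(I_n-\gamma G^{\top})^{-1}=\sum_{k\ge0}\gamma^{k}(G^{\top})^{k}$ for $\gamma=\delta(1\pm\beta)$, which converges by Assumption~\ref{assum3}, and using $G\ge0$ entrywise shows that both Katz--Bonacich vectors in \eqref{eq:c} dominate $\mathbf 1_n$; hence $c_i\ge1$, which yields $\Sigma\ge n\to\infty$, $\sum_i c_i\le\Sigma$, and, writing $u_i$ for the $i$-th entry of $\boldsymbol{{\mathfrak c}_{\text{katz}}}(G,\delta(1+\beta))$, the pointwise bound $u_i\le 2c_i$ (since $c_i=\tfrac12(u_i+v_i)\ge\tfrac12 u_i$). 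Combining these, $C\le 2\kappa\sum_i c_i\le 2\kappa\,\Sigma$ with $\kappa:=\tfrac{\delta(\alpha-p)}{2p(1-\delta)}$; that is, $C=\mathcal O(\Sigma)$.

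The two implications then follow from elementary inequalities. For the forward direction, $C=\mathcal O(\Sigma)$ gives $C+A\le(2\kappa+1)(A+B)$, hence $\bar\tau=B/(C+A)\ge(2\kappa+1)^{-1}\bar q$, so $\bar\tau=o(1)$ forces $\bar q=o(1)$. For the reverse direction only $C\ge0$ is needed: $\bar\tau=B/(C+A)\le B/A$, and once $\bar q\le\tfrac12$ (which holds for all large $n$ when $\bar q=o(1)$) one has $B\le A$ and $A\ge\Sigma/2>0$, so $\bar q=B/(A+B)\ge B/(2A)$, whence $\bar\tau\le B/A\le 2\bar q$, and $\bar q=o(1)$ forces $\bar\tau=o(1)$. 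This gives $\bar\tau=o(1)\iff\bar q=o(1)$, and the same computation for $\underline\tau$ and the underlined ratio completes the proof.

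I expect the only genuine obstacle to be the upper bound $C=\mathcal O(\Sigma)$. The denominator of $\bar\tau$ carries the zero-seeding base-utility constant $C$, which is an $\ell^1$-type functional of the centralities, while the normalizer $\Sigma$ in \eqref{eq:iff} is $\ell^2$-type; reconciling the two is precisely where the uniform lower bound $c_i\ge1$ (via \eqref{eq:c} and Definition~\ref{def::centrality}) is indispensable, since it is what lets the linear sum $\sum_i c_i$ be absorbed into $\sum_i c_i^2$. The remaining steps are routine bookkeeping.
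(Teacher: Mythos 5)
Your proof is correct and follows essentially the same route as the paper: both reduce to the $\varepsilon$-equilibrium theorem and then show the $\tau$-ratios are comparable, up to constant factors, to the normalized ratios in \eqref{eq:iff}, with the key step being $c_i\ge1$ (from the nonnegative Neumann series) to absorb the linear Katz--Bonacich sum in the denominator into $\sum_i c_i^2$. Your write-up is in fact slightly more careful than the paper's (you handle the $A$ versus $A+B$ denominator discrepancy explicitly and use the correct Katz vector with attenuation $\delta(1+\beta)$, where the paper's proof bounds $\sum_i a_i$ instead of $\sum_i b_i$), but the underlying argument is the same.
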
}

\begin{proof}
We first start with the fact that $\varepsilon = o(1)$ if and only if $\frac{\varepsilon}{\varepsilon+1} = o(1)$. Therefore, the pair of $\varepsilon$-equilibrium seeding strategies is asymptotically sparse-realizable if and only if 
\begin{small}
\begin{equation}
  \frac{  \sum_{i \notin \mathcal S} c_i^2 }{\frac{\delta(\alpha-p)}{2p(1-\delta)} \mathbf 1_n^{\top}\boldsymbol{{\mathfrak c}_{\text{katz}}}(G,\delta(1+\beta))+ \sum_{i \in [n]} c_i^2} ~=~ o(1).
\end{equation}
\end{small}
Let us recall
   \begin{align}
\mathbf c &=[c_1, \cdots, c_n]^\top\\
&:= \frac{1}{2}\boldsymbol{{\mathfrak c}_{\text{katz}}} \left (G, \delta (1-\beta)\right)+\frac{1}{2}\boldsymbol{{\mathfrak c}_{\text{katz}}} \left (G, \delta (1+\beta)\right),
   \end{align}
where $\mathbf a~=~[a_1, \cdots, a_n]^\top:=~\boldsymbol{{\mathfrak c}_{\text{katz}}} \left (G, \delta (1-\beta)\right)$,
and $\mathbf b~=~[b_1, \cdots, b_n]^\top~:=~\boldsymbol{{\mathfrak c}_{\text{katz}}} \left (G, \delta (1+\beta)\right)$.
It can be seen that 
\[  \sum_{i \in [n]} a_i ~\leq~ 2 \sum_{i \in [n]} c_i ~\leq~ 2 \sum_{i \in [n]} c_i^2,\]
where in the last inequality, we use the fact that $c_i \geq 1$.
{
\begin{small}
\begin{equation}
\frac{\sum_{i\notin S}c_i^2}{\sum_{i \in [n] }c_i^2} \geq \frac{\sum_{i\notin S}c_i^2}{\sum_{i \in [n] }c_i^2+ \kappa \sum_{i \in [n]} a_i} \geq \frac{\sum_{i\notin S}c_i^2}{(1+2 \kappa)\sum_{i \in [n] }c_i^2} 
\end{equation}
\end{small}}
where $\kappa = \frac{\delta(\alpha-p)}{2p(1-\delta)} \geq 0$.

\end{proof}

As the next result, we derive a necessary condition for the existence of asymptotically sparse-realizable  equilibrium strategies. 

{\begin{proposition}
Let $d^{\text{out}}_{\max} = {\max_{i \in [n]}} \, d_i^{\text{out}}$, and suppose that 
\[\delta(1+\beta) \, d^{\text{out}}_{\max}<1.\]
 Then, there exists no pair of $\varepsilon$-equilibrium seeding strategies that is asymptotically sparse-realizable. 
\end{proposition}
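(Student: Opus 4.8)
The plan is to argue by contradiction through the characterization of Lemma~\ref{lemma:3}. Suppose some pair $(\bar{\mathbf s}^\star, \underline{\mathbf s}^\star)$ with bounded supports $\bar{\mathcal S}, \underline{\mathcal S}$ (so $|\bar{\mathcal S}|, |\underline{\mathcal S}| = \mathcal O(1)$) were asymptotically sparse-realizable. By Lemma~\ref{lemma:3} this forces
\[
\frac{\sum_{i \notin \mathcal S} c_i^2}{\sum_{i \in [n]} c_i^2} = o(1)
\]
for $\mathcal S \in \{\bar{\mathcal S}, \underline{\mathcal S}\}$. Rewriting the left-hand side as $1 - \frac{\sum_{i \in \mathcal S} c_i^2}{\sum_{i \in [n]} c_i^2}$, the criterion is equivalent to requiring $\frac{\sum_{i \in \mathcal S} c_i^2}{\sum_{i \in [n]} c_i^2} \to 1$. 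I will show the degree hypothesis makes this impossible, because the numerator stays bounded while the denominator grows linearly in $n$.

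The key step is a uniform upper bound on the entries of $\mathbf c$. Expanding the Katz--Bonacich centrality as a Neumann series, which is legitimate since $\delta(1+\beta)\rho(G) < 1$ by Assumption~\ref{assum3}, gives $\boldsymbol{{\mathfrak c}_{\text{katz}}}(G,\alpha) = \sum_{t \geq 0} \alpha^t (G^\top)^t \mathbf 1_n$. Observing that $G^\top \mathbf 1_n$ is precisely the vector of out-degrees, so that $\|G^\top\|_\infty = d^{\text{out}}_{\max}$, submultiplicativity of the induced $\infty$-norm yields $\|(G^\top)^t \mathbf 1_n\|_\infty \leq (d^{\text{out}}_{\max})^t$. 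Hence, for $\alpha = \delta(1\pm\beta)$ and under $\delta(1+\beta) d^{\text{out}}_{\max} < 1$,
\[
0 \leq c_i \leq \frac{1}{1 - \delta(1+\beta)\, d^{\text{out}}_{\max}} =: C
\]
uniformly over $i \in [n]$, where I use $\delta(1-\beta) \leq \delta(1+\beta)$ to bound both summands in \eqref{eq:c} by the same quantity. The lower bound $c_i \geq 1$ (already used in the proof of Lemma~\ref{lemma:3}, coming from the $t=0$ term) then gives $\sum_{i \in [n]} c_i^2 \geq n$.

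Combining these estimates closes the argument: since $|\mathcal S| = \mathcal O(1)$, we have $\sum_{i \in \mathcal S} c_i^2 \leq |\mathcal S|\, C^2 = \mathcal O(1)$, so
\[
\frac{\sum_{i \in \mathcal S} c_i^2}{\sum_{i \in [n]} c_i^2} \;\leq\; \frac{|\mathcal S|\, C^2}{n} \;\longrightarrow\; 0,
\]
forcing the ratio in \eqref{eq:iff} to tend to $1$ rather than to $0$. This contradicts the assumed asymptotic sparse-realizability and proves the claim.

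The main obstacle is ensuring that the constant $C$ is genuinely independent of $n$: the out-degree $d^{\text{out}}_{\max}$ may itself grow with the population, so the hypothesis must be read as $\delta(1+\beta)\, d^{\text{out}}_{\max}$ remaining bounded away from $1$ uniformly in $n$; otherwise $C \to \infty$ and the denominator estimate could degrade. I would state this uniformity explicitly, after which every remaining step is an elementary norm estimate.
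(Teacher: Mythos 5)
Your proof is correct and follows essentially the same route as the paper: a uniform upper bound on the $c_i$'s from the hypothesis $\delta(1+\beta)\,d^{\text{out}}_{\max}<1$, combined with the lower bound $c_i\geq 1$, shows the ratio in \eqref{eq:iff} stays bounded away from zero, so Lemma~\ref{lemma:3} rules out asymptotic sparse-realizability. Your closing remark that the hypothesis must hold uniformly in $n$ (so that the constant $C$ does not blow up) is a valid and worthwhile precision that the paper leaves implicit.
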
}

\begin{proof}
It can be seen that
\begin{small}
\[1+\delta\beta d_i^{\text{out}}~\leq~ c_i~\leq~ \frac{1-\delta\beta d^{\text{out}}_{\max}}{(1-\delta(1+\beta)d^{\text{out}}_{\max})(1-\delta(1-\beta)d^{\text{out}}_{\max})},\]
\end{small}
where $\mathbf c=[c_1, \cdots, c_n]$ is given by \eqref{eq:c}.
Therefore, using these inequalities, it can be seen that if $\delta(1+\beta) d_{max}^{\text{out}} < 1$ then $c_i$'s are bounded; therefore \eqref{eq:iff} does not hold. This completes the proof.

\end{proof}

This result implies that networks with bounded out-degree (i.e.,  $d_{\max}^{\text{out}} = \mathcal O(1)$) are not asymptotically sparse-realizable. 

We conclude this section by stating a sufficient condition for the existence of asymptotically sparse-realizable equilibrium strategies.
\begin{proposition}
Suppose that the pair of seeding strategies \eqref{eq::enash} is asymptotically sparse-realizable. Then, \[\max_{i \in [n]} c_i ~=~ \mathcal O(n),\] where $\mathbf c=[c_1, \cdots, c_n]^\top$ is given by \eqref{eq:c}. 
\end{proposition}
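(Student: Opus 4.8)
The plan is to first reduce the claim to a bound on the larger of the two Katz--Bonacich centralities in \eqref{eq:c}. Since $0<\delta(1-\beta)<\delta(1+\beta)$ and $G\ge 0$, the Neumann expansions $\boldsymbol{{\mathfrak c}_{\text{katz}}}(G,\delta(1\pm\beta))=\sum_{t\ge0}\delta^t(1\pm\beta)^t(G^\top)^t\mathbf 1_n$ are entrywise nonnegative and the $(1-\beta)$ series is dominated term by term by the $(1+\beta)$ series. Writing $\mathbf b:=\boldsymbol{{\mathfrak c}_{\text{katz}}}(G,\delta(1+\beta))$, this gives $1\le c_i\le b_i$ for every $i$, so it suffices to prove $\max_i b_i=\mathcal O(n)$.

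Next I would exploit the self-consistent form of the centrality. From $(I_n-\delta(1+\beta)G^\top)\mathbf b=\mathbf 1_n$ we obtain, for each $i$,
\[ b_i~=~1+\delta(1+\beta)\sum_{j} g_{ji}\,b_j, \]
so $b_i$ equals $1$ plus a weighted sum of the centralities of the out-neighbours of $i$, the weights summing to $\delta(1+\beta)\,d_i^{\text{out}}$. Under the standard normalization that edge weights are at most one (hence $d_i^{\text{out}}\le n$), this identity fixes the natural linear-in-$n$ scale: a single hop contributes an additive $\mathcal O(n)$ term, and any growth beyond $\mathcal O(n)$ must be manufactured by \emph{stacking} many hops through high-centrality intermediate nodes.

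The role of the sparse-realizability hypothesis is precisely to forbid such stacking. By Lemma \ref{lemma:3}, asymptotic sparse-realizability forces the $\ell^2$ mass of $\mathbf c$ to concentrate on the $\mathcal O(1)$ seeded nodes $\bar{\mathcal S},\underline{\mathcal S}$, which must therefore be the nodes of largest centrality; since $|\bar{\mathcal S}|,|\underline{\mathcal S}|=\mathcal O(1)$, one gets $\sum_{i\in[n]}c_i^2=\Theta\big((\max_i c_i)^2\big)$, so the target $\max_i c_i=\mathcal O(n)$ is equivalent to $\|\mathbf c\|_2=\mathcal O(n)$. I would then argue the contrapositive: if $\max_i b_i=\omega(n)$, iterating the displayed identity backwards along the walks that generate this value produces $\omega(1)$ distinct nodes whose centralities are within a constant factor of the maximum; each such node carries $\ell^2$ mass comparable to $(\max_i c_i)^2$, so $\sum_{i\notin \bar{\mathcal S}}c_i^2$ cannot be $o\big(\sum_{i}c_i^2\big)$ for any fixed-size $\bar{\mathcal S}$, contradicting \eqref{eq:iff}.

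The hard part will be this last step: quantitatively tracking how a super-linear centrality at one node propagates backward to populate a growing set of high-centrality nodes, and verifying that the resulting $\ell^2$ mass genuinely defeats concentration on an $\mathcal O(1)$-size support. The delicacy is that $G$ is non-normal, so the resolvent entries are \emph{not} controlled by the spectral-radius bound of Assumption \ref{assum3} alone; instead one must work directly with the nonnegative walk expansion and the bounded-degree identity above, using the concentration constraint to cap the number of amplifying hops before the mass inevitably spreads out. Once the backward-propagation count is made precise, combining it with $d_i^{\text{out}}\le n$ yields $\max_i c_i=\mathcal O(n)$.
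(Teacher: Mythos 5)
Your proposal does not close. The decisive step---that a node with centrality $\omega(n)$ forces, via ``backward propagation'' along the identity $b_i = 1+\delta(1+\beta)\sum_j g_{ji}b_j$, the existence of $\omega(1)$ distinct nodes of comparable centrality, thereby defeating $\ell^2$-concentration on an $\mathcal O(1)$ support---is exactly the part you defer to the end, and it fails. Take a single node, say node $1$, that influences all $n-1$ others with weight $w=\Theta(n)$ and no other edges: $G$ is nilpotent, so Assumption~\ref{assum3} holds, and one computes $b_1=1+\delta(1+\beta)(n-1)w=\Theta(n^2)$ while $b_j=1$ for $j\ne 1$ (similarly for $\mathbf a$, hence for $\mathbf c$). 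The squared centrality then concentrates entirely on the singleton $\{1\}$, so \eqref{eq:iff} holds with $|\bar{\mathcal S}|=|\underline{\mathcal S}|=1$, yet $\max_i c_i=\omega(n)$. Concentration of $\|\mathbf c\|_2^2$ on an $\mathcal O(1)$ set is perfectly compatible with those few nodes having arbitrarily large centrality, so no contradiction of the kind you seek can be extracted from Lemma~\ref{lemma:3}. You also quietly import a normalization (edge weights at most one, hence $d_i^{\text{out}}\le n$) that the paper never assumes; and even granting it, your one-hop bound reads $b_i\le 1+\delta(1+\beta)\,d_i^{\text{out}}\max_j b_j$, which is circular rather than linear in $n$.

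For comparison, the paper disposes of this proposition in one line as a ``direct consequence of Lemma~\ref{lemma:3}'', i.e., purely from the concentration condition \eqref{eq:iff} together with $c_i\ge 1$ and $|\bar{\mathcal S}|,|\underline{\mathcal S}|=\mathcal O(1)$: the numerator of \eqref{eq:iff} is at least $n-\mathcal O(1)$, so $\sum_i c_i^2=\omega(n)$ and the $\mathcal O(1)$ seeded nodes must each carry squared centrality $\omega(n)$. That is a growth (lower-bound) condition on the dominant centralities, obtained with no walk expansion, degree identity, or backward propagation. Note that the example above shows an \emph{upper} bound of the form $\max_i c_i=\mathcal O(n)$ cannot follow from \eqref{eq:iff} alone, so the route you chose---proving the literal upper bound from sparse-realizability---would require structural hypotheses on $G$ (e.g., bounded weights and bounded amplification depth) that neither you nor the paper supplies.
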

\begin{proof}
The proof is a direct consequence of Lemma \ref{lemma:3}.

\end{proof}

\section{Discussion}
\label{sec::discussion}
{We illustrate our results using a core-periphery network structure, that is a network with few highly interconnected and many sparsely connected nodes.
\begin{figure}
\centering
\include{ex3}
\caption{A core-periphery network consisting of three communities each with one role model that influences every community member by $g$. Each role model herself is influenced by the role model of another community by amount $g$.}
\label{fig::coreperiphery}
\end{figure}

Consider a network consisting of $\chi$ communities of size $m$ denoted by $\mathcal C_r~=~\left \{(r-1)m+1,\cdots,rm \right\}$
for $r\in[\chi]$. Consumption levels of the agents within each community are influenced by an agent  called a \textit{role model} that induces an externality effect of magnitude $g$ on each consumer. Consumption levels of role models themselves are each influenced by the consumption of a role model in another community whose structure is assumed to form a cycle. Let agent $rm$ be the role model in community $\mathcal C_r$ for $r\in [\chi]$. The corresponding entries of the adjacency matrix $G$ are given by
\begin{equation}
g_{ij}\!=\!\begin{cases}
g,&\hspace*{-6pt}\text{for}\,i\in \mathcal C_r\setminus\{rm\}~\text{and}~j=rm\\
g,&\hspace*{-6pt}\text{for}\,(i,j)\!\in\!\{((r\!+\!1)m,rm)|r\!\in\![\chi\!-\!1]\}\!\cup\!\{(m,\chi m)\}\\
0,&\hspace*{-6pt}\text{otherwise}
\end{cases}
\end{equation}
The case $\chi=3$ is depicted in Fig.~\ref{fig::coreperiphery}. 

Characterizing the equilibrium seeding strategies (both Nash equilibrium and $\varepsilon$-equilibria) requires finding centrality vectors \[\mathbf a~=~\left(I_n-\delta(1-\beta) G^\top\right)^{-1}\mathbf 1_n,\] and 
\[\mathbf b~=~\left(I_n-\delta(1+\beta) G^\top\right)^{-1}\mathbf 1_n,\] where $n=\chi m$. Let $a_L$ and $a_F$ be the corresponding centralities of a role model and a periphery consumer in $\mathbf a$, respectively. A periphery consumer does not induce externality on any other consumer, resulting in a centrality of $a_F=1$. For a role model, on the other hand, from the definition of Katz-Bonacich centrality we can obtain
\begin{align}
a_L=1+(m-1)\delta(1-\beta)g+\delta(1-\beta)g a_L,
\end{align}
resulting in 
$a_L~=~\frac{1+(m-1)\delta(1-\beta)g}{1-\delta(1-\beta)g}$.
Similarly, we can find $b_F=1$ and $b_L~=~\frac{1+(m-1)\delta(1+\beta)g}{1-\delta(1+\beta)g}$.

Applying Theorem~\ref{theorem::Nash}, the symmetric Nash equilibrium strategy thus involves seeding a periphery consumer by
$\underline{s}^\star_F~=~\bar{s}^\star_F~=~p$, and every role model by an amount of
   \begin{align}
\label{eq::rolemodelL}
&\underline{s}^\star_L~=~\bar{s}^\star_L~=~\nonumber \\
&~~~\frac{p}{2}\left(\frac{1\!+\!(m\!-\!1)\delta(1\!-\!\beta)g}{1-\delta(1-\beta)g}+\frac{1\!+\!(m\!-\!1)\delta(1\!+\!\beta)g}{1-\delta(1+\beta)g}\right).
   \end{align}
Keeping the number of communities $\chi $ fixed and shifting $m\to\infty$, we can use Lemma~\ref{lemma:3} to verify that seeding only the $\chi$ role models according to \eqref{eq::rolemodelL} is an asymptotically sparse-realizable equilibrium strategy. Finally, Assumption~2 requires $\delta(1+\beta)\lambda_{\max}(G)<1$ which can be satisfied if $\delta(1+\beta)g<1$ noting that $\lambda_{\max}(G)=g$ (this follows from $d_i^{\text{in}}=g$ for all $i\in [n]$).
}

\section{Conclusion And Future Work}
\label{sec::conclusions}
We proposed and studied a strategic model of marketing and product consumption in social networks. Two firms offer substitutable products and compete to maximize the consumption of their products in a social network. Consumers are myopic and update their consumption level as the best response to the consumption of their neighbors in the previous period. This results in linear update dynamics for the product consumption. Moreover, each consumer receives externality from the consumption of each neighbor;  the externality is stronger for consumption of the same product. Firms can improve their market share by seeding the key consumers in the market, as their consumption will incentivize the consumption of the same product by their peers given the inter-agent externalities, which in turn can affect the consumption behavior all over the network.
We represented the above setting as a duopoly game between the firms and introduced a novel framework that allows for sparse seeding to asymptotically emerge as an equilibrium strategy.
We then studied the effect of the network structure on the optimal seeding strategies and the extent to which these strategies could be sparsified, under the proposed equilibrium concept. In particular, we derive necessary and sufficient conditions under which $\varepsilon$-Nash equilibrium strategies can asymptotically lead to sparse seeding in large populations. The results were demonstrated using a large core-periphery network structure with few highly connected and many sparsely connected nodes.
Extending our analysis to time-varying seeding strategies (recurring seeding), networks with uncertainties (e.g., in valuations of consumers about the products), and oligopoly setting are some of the potential venues for future research.

\bibliographystyle{IFAC}        
\begin{spacing}{1.4}
\bibliography{ref/main_Milad}
\end{spacing}


\end{document}